\newenvironment{lenumerate}[2][]
{\begin{enumerate}[label=(#2\arabic*),leftmargin=0.2in,itemindent=0.15in,#1]}
{\end{enumerate}}
\setlist*[enumerate,1]{label={\itshape\arabic*)}}
\newcommand{\paragraphswithstop}{%
\let\copyparagraph\paragraph%
\renewcommand\paragraph[1]{\copyparagraph{##1.}}%
}
\newsavebox{\boxifnotempty}
\newcommand{\displayifnotempty}[3]{\sbox\boxifnotempty{#2}\setbox0=\hbox{\usebox{\boxifnotempty}\unskip}%
\ifdim\wd0=0pt
\else
 #1\usebox{\boxifnotempty}#3%
\fi%
}
\newcommand{\ifempty}[2]{\setbox0=\hbox{#1\unskip}%
\ifdim\wd0=0pt%
 #2%
\fi%
}
\newcommand{\ifnotempty}[2]{\setbox0=\hbox{#1\unskip}%
\ifdim\wd0>0pt%
 #2%
\fi%
}
\newcommand*\newstoreddef[1]{
  \BeforeClosingMainAux{%
    \immediate\write\@auxout{%
      \string\restoredef{#1}{\csname #1\endcsname}%
    }%
  }%
}
\newcommand*{\restoredef}[2]{
  \expandafter\gdef\csname stored@#1\endcsname{#2}%
}
\newcommand*{\storeddef}[1]{
  \@ifundefined{stored@#1}{0}{\csname stored@#1\endcsname}%
}
\newcommand{\real}[1]{\mathbb{R}^{#1}{}}
\newcommand{\naturals}[1]{\mathbb{N}^{#1}{}}
\newcommand{\bmat}[1]{\begin{bmatrix}#1\end{bmatrix}}
\DeclareMathOperator*{\argmax}{\arg\!\max}
\DeclareMathOperator{\clip}{{clip}}
\providecommand{\cM}{\mathcal{M}}
\newcommand{\newcolorlabel}[2]{%
  \expandafter\newcommand\csname #1\endcsname[1]{%
    \colorbox{#2}{\color{white}\textsf{\textbf{##1}}}}%
}
\newcommand{\newcommenter}[2]{%
  \expandafter\newcommand\csname #1\endcsname[1]{%
    \fcolorbox{#2}{#2}{\color{white}\textsf{\textbf{#1}}}
    {\color{#2}##1}}%
  \expandafter\newcommand\csname at#1\endcsname{%
    \fcolorbox{#2}{#2}{\color{white}\textsf{\textbf{@#1}}}
    {\color{#2}}}%
  \expandafter\newcommand\csname #1hl\endcsname[2]{%
    \colorbox{#2}{\color{white}\textsf{\textbf{#1}}}\sethlcolor{Azure2}\hl{##2}~%
    \expandafter\ifx\csname commentarrow\endcsname\relax$\leftarrow$\else \commentarrow[#2]\fi~%
    {\color{#2}##1}}%
  \expandafter\newcommand\csname #1st\endcsname[2]{%
    \colorbox{#2}{\color{white}\textsf{\textbf{#1}}}\sout{##2}~%
    \expandafter\ifx\csname commentarrow\endcsname\relax$\leftarrow$\else \commentarrow[#2]\fi~%
    {\color{#2}##1}}%
}
\tikzset{
  dim above/.style={to path={\pgfextra{
        \pgfinterruptpath
        \draw[>=latex,|->|] let
        \p1=($(\tikztostart)!1.5em!90:(\tikztotarget)$),
        \p2=($(\tikztotarget)!1.5em!-90:(\tikztostart)$)
        in(\p1) -- (\p2) node[pos=.5,sloped,above]{#1};
        \endpgfinterruptpath
      }
    }
  },
  dim double above/.style={to path={\pgfextra{
        \pgfinterruptpath
        \draw[>=latex,|->|] let
        \p1=($(\tikztostart)!3em!90:(\tikztotarget)$),
        \p2=($(\tikztotarget)!3em!-90:(\tikztostart)$)
        in(\p1) -- (\p2) node[pos=.5,sloped,above]{#1};
        \endpgfinterruptpath
      }
    }
  },
  dim below/.style={to path={\pgfextra{
        \pgfinterruptpath
        \draw[>=latex,|->|] let 
        \p1=($(\tikztostart)!-1em!-90:(\tikztotarget)$),
        \p2=($(\tikztotarget)!-1em!90:(\tikztostart)$)
        in (\p1) -- (\p2) node[pos=.5,sloped,below]{#1};
        \endpgfinterruptpath
      }
    }
  },
}
\tikzset{
    right angle quadrant/.code={
        \pgfmathsetmacro\quadranta{{1,1,-1,-1}[#1-1]}     
        \pgfmathsetmacro\quadrantb{{1,-1,-1,1}[#1-1]}},
    right angle quadrant=1, 
    right angle length/.code={\def\rightanglelength{#1}},   
    right angle length=2ex, 
    right angle symbol/.style n args={3}{
        insert path={
            let \p0 = ($(#1)!(#3)!(#2)$) in     
                let \p1 = ($(\p0)!\quadranta*\rightanglelength!(#3)$), 
                \p2 = ($(\p0)!\quadrantb*\rightanglelength!(#2)$) in 
                let \p3 = ($(\p1)+(\p2)-(\p0)$) in  
            (\p1) -- (\p3) -- (\p2)
        }
    }
}
\newcommand{\pgfextractangle}[3]{%
    \pgfmathanglebetweenpoints{\pgfpointanchor{#2}{center}}
                              {\pgfpointanchor{#3}{center}}
    \global\let#1\pgfmathresult  
}
\newcommand{\commentarrow}[1][Azure4]{\tikz[baseline=-3pt]{\node[shape border uses incircle, fill=#1,rotate=180,single arrow, inner sep=1pt, minimum size=6pt, single arrow head extend=2pt]{};}}
\tikzset{ax/.style={-latex,line width=2pt}}
\tikzset{camera/.style={fill=Sienna1,fill opacity=0.5},%
image plane/.style={draw=RoyalBlue3,line width=2pt}}
\title{\LARGE \bf
Learning Signal Temporal Logic through Neural Network for Interpretable Classification
}
\author{Danyang Li$^{1}$, Mingyu Cai$^{2}$, Cristian-Ioan Vasile$^{2}$, Roberto Tron$^{1}$
\thanks{$^{1}$Danyang Li and Roberto Tron are with Mechanical Engineering Department, Boston University, Boston, MA 02215, USA.
        {\tt\small danyangl@bu.edu, tron@bu.edu}}%
\thanks{$^{2}$M. Cai and C.I. Vasile are with the Department of Mechanical Engineering and Mechanics, Lehigh University, Bethlehem, PA, USA
        {\tt\small mingyu-cai@lehigh.edu, cvasile@lehigh.edu}}%
}
\newtheorem{example}{Example}
\tikzset{>=latex} 
\begin{document}

\def\next{\circ}
\def\always{\square}
\def\event{\lozenge}
\def\relu{\text{ReLU}}
\def\layersep{2.5}
\def\layerseps{2.7}
\def\yshift{-2}

\maketitle
\thispagestyle{empty}
\pagestyle{empty}

\begin{abstract}
Machine learning techniques using neural networks have achieved promising success for time-series data classification. However, the models that they produce are challenging to verify and interpret.
In this paper, we propose an explainable neural-symbolic framework for the classification of time-series behaviors. In particular, we use an expressive formal language, namely Signal Temporal Logic (STL), to constrain the search of the computation graph for a neural network. We design a novel time function and sparse softmax function to improve the soundness and precision of the neural-STL framework. As a result, we can efficiently learn a compact STL formula for the classification of time-series data through off-the-shelf gradient-based tools.
We demonstrate the computational efficiency, compactness, and interpretability of the proposed method through driving scenarios and naval surveillance case studies, compared with state-of-the-art baselines.
\end{abstract}

\section{INTRODUCTION}
Binary classification is a standard task in Supervised Machine Learning (ML). Many techniques such as K-Nearest Neighbors (KNN), Supported Vector Machine (SVM), and Random Forest can be employed. A particular version of the procedure introduces data derived from the evolution of dynamic systems, which are in the form of time-series data; this task can still be challenging for the current state-of-the-art ML algorithm.

In recent years, deep learning (DL) methods have achieved great success in complex pattern recognition, with a variety of architectures such as convolutional~\cite{acharya2017deep}, recurrent~\cite{tang2015document}, etc. However, the common drawback of DL and other ML techniques is the lack of human-interpretability of the learned models. This property is very important in control and robotics since it can support qualitative and quantitative analysis. In this paper, we propose to use the rich expressivity of Signal Temporal Logic (STL) ~\cite{maler2004monitoring} for training classification of symbolic neural networks.

Signal Temporal Logic (STL) is a formal language designed to express rich and human-interpretable specifications over time series. Its quantitative satisfaction can be measured via a recursive definition of robustness~\cite{maler2004monitoring}, which can be used in control optimization problems~\cite{raman2014model} and sequential predication task~\cite{ma2020stlnet}. The task of learning STL formulas for classification also leverages the robustness to formulate an optimization problem. Previous attempts~\cite{jin2013mining, bakhirkin2018efficient, hoxha2018mining} for learning temporal logic properties assumed fixed formula structures. Decision tree based approaches~\cite{kong2016temporal,bombara2016decision, mohammadinejad2020interpretable, aasi2021classification,gaglione2022maxsat} are proposed to learn both operators and parameters via growing logic branches. The branches growing of decision trees increases as the structure of desired STL formula becomes more complicated. To the best of our knowledge, the training time for the decision tree based method is in order of minutes while the temporal logic neural network can achieve order of seconds.
To learn temporal logic formulas from scratch, developing logic neural networks~\cite{rocktaschel2017end, yang2017differentiable} attracts more attention.
However, the max/min operations in the computation of robustness are not continuously differentiable. These operations cannot be directly applied in a neural network since the network is not able to learn with it.

\textbf{Related works:\space}
Many works use STL for classification. Some works~\cite{leung2020back, ketenci2020learning, yan2021neural,baharisangari2021weighted,chen2022interpretable} use the recursive definition of the STL and the structure of multi-layer neural networks, allowing to directly apply
off-the-shelf gradient-based deep learning software.
This kind of work can be divided into two categories i.e., template-free and template-based learning.
Intuitively, template-based learning~\cite{leung2020back, ketenci2020learning, yan2021neural,baharisangari2021weighted} fixes the STL formula to be learned from data and only learns some parameters. Whereas template-free learning~\cite{chen2022interpretable} learns the overall STL formula including its structure from scratch. Our work uses this latter category.

To address the non-smooth issues of the original STL robustness\cite{maler2004monitoring}, the differentiable versions of robustness computation were proposed in~\cite{leung2020back,yan2021neural}, which, however, can not guarantee the soundness properties of STL and the accuracy of satisfaction. The work~\cite{chen2022interpretable} proposes an alternative formulation that uses arithmetic-geometric mean (AGM) STL robustness~\cite{mehdipour2019arithmetic}. The AGM robustness is not continuous and can not directly be backpropagated in a neural network, resulting in problems of efficiency and adaptation to gradient-based tools. In addition, both works~\cite{yan2021neural,chen2022interpretable} learn a weighted STL (wSTL)~\cite{9309020} that is harder for a user to interpret if the complex weights are not manually omitted.

\textbf{Contributions:} First, we propose a template-free neural network architecture with given fragments for the classification of time-series signals. The network can be interpreted as level-one STL formulae. Our proposed models rely on novel activation functions, especially on two new components: a compact way to represent time intervals for temporal operators, and a sound, differentiable approximation to the maximum operator. As an additional contribution, we introduce a synthetic dataset containing prototypical behaviors inspired by autonomous driving situations. We use this dataset to evaluate the performance of our methods, showing that we are able to achieve high efficiency and automatically extract qualitatively meaningful features of the data. More broadly, we hope this dataset will serve as a benchmark for future work in the classification of time series.

\section{Preliminaries and Problem Formulation}\label{sec-2}
We denote a $d$-dimensional, discrete-time, real-valued signal as $s=(s(0),s(2),...,s(l))$, where $l\in \naturals{+}$ is the length of the signal, $s(\tau)\in\real{d}$ is the value of signal $s$ at time $\tau$.
\subsection{Signal Temporal Logic}
We use signal temporal logic (STL) to specify the behaviors of time-series signals.
\begin{definition}
The syntax of STL formulae with linear predicates is defined recursively as~\cite{maler2004monitoring}:
\begin{equation}
    \phi::=\mu \mid \neg \phi \mid \phi_1\land\phi_2 \mid \phi_1\lor\phi_2 \mid \event_{[t_1,t_2]}\phi \mid \always_{[t_1,t_2]}\phi,
\end{equation}
where $\mu$ is a predicate defined as $\mu:=\mathbf{a}^Ts(\tau)\geq b$, $\mathbf{a}\in\real{d}$, $b\in\real{}$. $\phi,\phi_1,\phi_2$ are STL formulas. The Boolean operators $\neg,\land,\lor$ are \emph{negation}, \emph{conjunction} and \emph{disjunction}, respectively. The temporal operators $\event,\always$ represent \emph{eventually} and \emph{always}. $\event_{[t_1,t_2]}\phi$ is true if $\phi$ is satisfied for at least one point $\tau\in[t_1,t_2]$, while $\always_{[t_1,t_2]}\phi$ is true if $\phi$ is satisfied for all time points $\tau\in[t_1,t_2]$.
\end{definition}

\begin{definition}\label{def:r}
The quantitative semantics, i.e., the robustness, of the STL formula $\phi$ over signal $s$ at time $\tau$ is defined as:
\begin{subequations}\label{stl-semantics}
\begin{align}
    r(s,\mu,\tau) &=\mathbf{a}^Ts(\tau)-b\label{stl-semantics:sub1},\\
    r(s,\neg \phi,\tau) &=-r(s,\phi,\tau)\label{stl-semantics:sub2},\\
    r(s,\land_{i=1}^N \phi_i,\tau) &= \min(\{r(s,\phi_i,\tau)\}_{i=1:N})\label{stl-semantics:sub3},\\
    r(s,\lor_{i=1}^N \phi_i,\tau) &= \max(\{r(s,\phi_i,\tau)\}_{i=1:N})\label{stl-semantics:sub4},\\
    r(s,\always_{[t_1,t_2]}\phi,\tau) &= \min_{\tau'\in[\tau+t_1,\tau+t_2]}r(s,\phi,\tau')\label{stl-semantics:sub5},\\
    r(s,\event_{[t_1,t_2]}\phi,\tau) &= \max_{\tau'\in[\tau+t_1,\tau+t_2]}r(s,\phi,\tau')\label{stl-semantics:sub6},
\end{align}
\end{subequations}
The signal $s$ is said to satisfy the formula $\phi$,
denoted as $s \models \phi$, if $r(s, \phi, 0) > 0$.
Otherwise, $s$ is said to violate $\phi$, denoted as $s \not\models \phi$.
\end{definition}

\begin{definition}\label{def:r-ind}
The quantitative semantics in Definition~\ref{def:r} can be expressed in the form of indicator vectors:
\begin{subequations}\label{stl-indicator}
\begin{align}
    r(s,\mu,\tau) &=\mathbf{a}^Ts-b\label{stl-ind:sub1},\\
    r(s,\neg \phi,\tau) &=-r(s,\phi,\tau)\label{stl-ind:sub2},\\
    r(s,\land_{i=1}^N \phi_i,\tau) &= -f(-\mathbf{r}_c,\mathbf{c})\label{stl-ind:sub3},\\
    r(s,\lor_{i=1}^N \phi_i,\tau) &= f(\mathbf{r}_d,\mathbf{d})\label{stl-ind:sub4},\\
    r(s,\always_{\mathbf{w_t}}\phi,\tau) &= -f(-\mathbf{r}_t,\mathbf{w_t})\label{stl-ind:sub5},\\
    r(s,\event_{\mathbf{w_t}}\phi,\tau) &= f(\mathbf{r}_t,\mathbf{w_t})\label{stl-ind:sub6},
\end{align}
\end{subequations}
where $\mathbf{r}_c=\text{stack}(\{r(s,\phi_i,\tau)\}_{i=1;N})$, $\mathbf{r}_d=\text{stack}(\{r(s,\phi_i,$ $\tau)\}_{i=1;N})$, $\mathbf{r}_t=\text{stack}(\{r(s,\phi,\tau')\}_{\tau'\in[0,l]})$. $\mathbf{c}$, $\mathbf{d}$ and $\mathbf{w_t}$ are binary indicator vectors. $\mathbf{w_t}=\bmat{w_{t0},...,w_{ti},...,w_{tl}}$ is based on the time parameters in \ref{stl-semantics:sub5}, \ref{stl-semantics:sub6}. In other words, $w_{ti}=1$ if $i\in[\tau+t_1,\tau+t_2]$, otherwise, $w_{ti}=0$. $f(\mathbf{r},\mathbf{w})=\max\{\mathbf{r_s}\}$, where $\mathbf{r_s}$ is obtained from $\mathbf{r}$ using the indicator vector $\mathbf{w}$ such that $r_i\in\mathbf{r_s}$ if $w_i=1$ and $r_i\notin\mathbf{r_s}$ if $w_i=0$.
\end{definition}

\begin{definition}[Soundness]\label{def:sound}
Let $\cM$ denote an algorithm for computing the robustness of an STL formula $\phi$ over signal $s$. We say $\cM$ is sound if $\cM(\phi,s)$ can provide a valid result for evaluating STL satisfaction, i.e.,
\begin{subequations}
\begin{align}
    s \models \phi \Longrightarrow \cM(\phi,s)>0\\
    s \not\models \phi \Longrightarrow \cM(\phi,s)\leq 0
\end{align}
\end{subequations}
\end{definition}

\subsection{Problem Statement}
\label{sec:problem-formulation}
We consider a binary classification task on a labeled data set $S=\{(s^i,c^i)\}_{i=1}^N$, where $s^i\in\real{l\times d}$ is the $i^{th}$ signal labeled as $c^i\in C$. $C=\{1,-1\}$ is the set of classes.

\begin{problem}
Given $S=\{(s^i,c^i)\}_{i=1}^N$, find an STL formula $\phi$ that minimizes the misclassification rate $MCR$ defined as:
\begin{equation*}
    MCR = \frac{\lvert\{s^i\mid (s^i \models \phi \land c^i = -1) \lor (s^i \not\models \phi \land c^i = 1)\}\rvert}{N}
\end{equation*}
\end{problem}

\smallskip

\begin{figure*}[tb]
\centering
\scalebox{.75}{
\begin{tikzpicture}[shorten >=2pt,->,draw=black!50, node distance=\layersep, yshift=\yshift]
    \tikzstyle{every pin edge}=[<-,shorten <=1pt]
    \tikzstyle{neuron}=[circle,fill=black!25, minimum size=1.2cm,inner sep=0pt]
    \tikzstyle{op}=[rectangle,fill=black!25, minimum size=1.2cm,inner sep=0pt, rounded corners]
    \tikzstyle{predicate neuron}=[neuron, fill=green!60!black!40];
    \tikzstyle{spatial operator neuron}=[op, fill=red!50];
    \tikzstyle{time}=[op, fill=orange!50];
    \tikzstyle{conjunction}=[neuron, fill=blue!40!];
    \tikzstyle{disjunction}=[neuron, fill=yellow!40!];
    \tikzstyle{annot} = [text width=4em, text centered];
    
    \node(rectangle) [minimum width = 2cm, minimum height = 7cm, draw=black, fill=white, rounded corners] (Operator) at (2*\layerseps,\yshift-3.4) {};
    \node[time] (t) at (\layersep,-1) {$\mathbf{t_1},\mathbf{t_2}$};
    \node[predicate neuron] (F-1) at (0,\yshift-1.3) {$\pi(s(0))$};
    \node[predicate neuron] (F-2) at (0,\yshift-1.3*2) {$\pi(s(1))$};
    \node at (0,\yshift-1.3*3) {$\vdots$};
    \node[predicate neuron] (F-3) at (0,\yshift-1.3*4) {$\pi(s(l))$};
    \node[predicate neuron] (P-1) at (\layersep,\yshift-1.3) {\large$r_0$};
    \node[predicate neuron] (P-2) at (\layersep,\yshift-1.3*2) {\large$r_1$};
    \node at (\layersep,\yshift-1.3*3) {$\vdots$};
    \node[predicate neuron] (P-3) at (\layersep,\yshift-1.3*4) {\large$r_l$};
    \node[spatial operator neuron] (S-1) at (2*\layerseps,\yshift+0.5-1.3*1 ){\large$g^{\always}_1$};
    \node[spatial operator neuron] (S-2) at (2*\layerseps,\yshift+0.45-1.3*2 ) {\large$g^{\event}_2$};
    \node at (2*\layerseps,\yshift+0.5-1.3*3) {$\vdots$};
    \node[spatial operator neuron] (S-3) at (2*\layerseps,\yshift+0.45-1.3*4 ){\large$g^{\always}_{m-1}$};
    \node[spatial operator neuron] (S-4) at (2*\layerseps,\yshift+0.5-1.3*5 ) {\large$g^{\event}_m$};
    \node[predicate neuron] (P-1) at (\layersep,\yshift-1.3) {\large$r_0$};
    \node[predicate neuron] (P-2) at (\layersep,\yshift-1.3*2) {\large$r_1$};
    \node at (\layersep,\yshift-1.3*3) {$\vdots$};
    \node[predicate neuron] (P-3) at (\layersep,\yshift-1.3*4) {\large$r_l$};
    \node[conjunction] (C-1) at (3*\layerseps,\yshift-1.3) {\large$\land_1$};
    \node[conjunction] (C-2) at (3*\layerseps,\yshift-1.3*2) {\large$\land_2$};
    \node at (3*\layerseps,\yshift-1.3*3) {$\vdots$};
    \node[conjunction] (C-3) at (3*\layerseps,\yshift-1.3*4) {\large$\land_k$};
    \node[disjunction] (D-1) at (4*\layerseps,\yshift-1.3*2-0.6) {\large$\lor$};

    \draw[->] (t) -| (Operator.north);
    \draw [<-] (F-1) -- ++(-1,0) node [left] {$s(0)$};
    \draw [<-] (F-2) -- ++(-1,0) node [left] {$s(1)$};
    \draw [<-] (F-3) -- ++(-1,0) node [left] {$s(l)$};
    \foreach \s/ \d in {1,...,3}
        \draw [->] (F-\s) -- (P-\d);
    \foreach \s in {1,...,3} {
      \draw[->] (P-\s) -- (P-\s-|Operator.west);}
    \foreach \s in {1,...,4}
        \foreach \d in {1,...,3}
            \draw [->] (S-\s) -- (C-\d);
    \draw [->] (S-1) -- (C-1) node[midway,above]{$c_{11}$};
    \draw [->] (S-4) -- (C-3) node[midway,below]{$c_{mk}$};
    \foreach \s in {1,...,2}
        \foreach \d in {1}
            \draw [->] (C-\s) -- node [above,midway] {$d_{\s}$} (D-\d);
    \draw [->] (C-3) -- node [above,midway] {$d_n$} (D-1);
    \foreach \s in {1}
        \draw [->] (D-\s) -- ++(1,0) node [right] {$r_{\chi}$};
\end{tikzpicture}
}
\caption{The structure of NN-TLI. The temporal operators accept predicates $r_0,...,r_l$ and time variables $\mathbf{t_1},\mathbf{t_2}$ as inputs. Here, $\mathbf{t_1}$ and $\mathbf{t_2}$ are vectors containing the starting time variables and ending time variables for all temporal operators.}
\label{fig:structure}
\end{figure*}
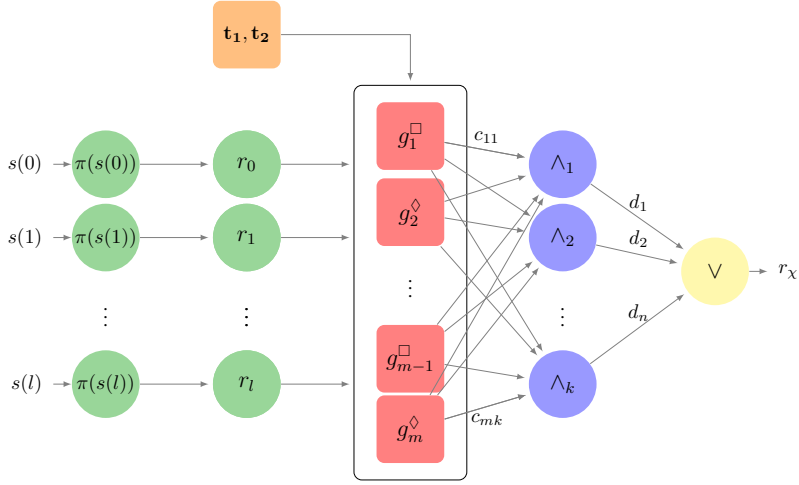

\subsection{Overview of Proposed Work}
\label{sec:overview}

We propose a Temporal Logic Inference (TLI) method based on neural network embedding of STL formulas.
Using our method, we can infer both the structure and parameters of STL properties.
The proposed Neural Network architecture for TLI (NN-TLI) enables the use of high-performance training methods and extraction of interpretable classifiers in the form of STL formulas from networks' weights.
The NN-TLI maps a time-series data point (a signal) $s$ to its robustness value of the learned formula with the sign of robustness defining the predicted class of signal $s$.
We restrict ourselves to an expressive fragment of STL that allows the proposed efficient encoding
and that is easy to interpret. 
In the following, we present the NN-TLI structure in Sec.~\ref{sec:nn-tli-arch} and its training method in Sec.~\ref{sec:nn-tli-training}.

\section{Neural Network Architecture}
\label{sec:nn-tli-arch}
The overall architecture of NN-TLI consists of four layers. The first layer is associated with the predicates of sub-formulas. It takes signals as input and outputs robustness values for all predicates at all times.
The second layer takes robustness values from the first layer and time variables as input and applies temporal operators to generate sub-formulas.
The last two layers implement logical operators to combine sub-formulas. In the following sections, we first introduce the fragment of STL captured by NN-TLI formulas.
Next, we describe details of each layer of the NN-TLI architecture and show how they implement the STL formula.

\subsection{STL(1)}

We define level-one STL, denoted by STL$(1)$, as
\begin{equation}
\begin{aligned}
    \phi&:=\mu \mid \neg \phi \mid \phi_1\land\phi_2 \mid \phi_1 \lor \phi_2,\\
    \psi&:=\event_{[t_1,t_2]}\phi|\always_{[t_1,t_2]}\phi,
\end{aligned}
\end{equation}

It is a proper fragment of STL without nested temporal operators; it still retains rich expressivity as shown by the following result.

\begin{theorem}
    The STL fragment STL$(1)$ is equal to its Boolean closure~\cite{bombara2016decision}.
\end{theorem}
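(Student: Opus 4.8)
The plan is to establish the two inclusions separately. One direction is trivial: every STL$(1)$ formula is, in particular, a (degenerate) Boolean combination of STL$(1)$ formulas, so STL$(1)$ sits inside its Boolean closure. All the content is in the converse — that every Boolean combination of STL$(1)$ formulas is logically equivalent to one that is already in STL$(1)$. I would prove this by showing that STL$(1)$ is closed, up to equivalence, under each of $\neg$, $\land$ and $\lor$, and then concluding by structural induction on an arbitrary formula of the Boolean closure.

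Closure under $\land$ and $\lor$ is essentially free: an STL$(1)$ formula is already a conjunction/disjunction of temporal literals $\event_{[t_1,t_2]}\phi$ and $\always_{[t_1,t_2]}\phi$ (with $\phi$ a propositional combination of predicates), so a conjunction or disjunction of two such formulas is again of this shape, and distributing $\land$ over $\lor$ re-normalizes it to the $\bigvee\bigwedge$ form realized by the last two layers of NN-TLI. The crux is closure under $\neg$. I would push a negation inward with De Morgan's laws through every $\land$/$\lor$ node, so that it suffices to handle $\neg\psi$ for a single temporal literal $\psi = \event_{[t_1,t_2]}\phi$ or $\psi = \always_{[t_1,t_2]}\phi$. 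For these, I invoke the temporal duality identities $\neg\,\event_{[t_1,t_2]}\phi \equiv \always_{[t_1,t_2]}\neg\phi$ and $\neg\,\always_{[t_1,t_2]}\phi \equiv \event_{[t_1,t_2]}\neg\phi$, which follow directly from Definition~\ref{def:r} since $-\max_{\tau'}(\cdot) = \min_{\tau'}(-\cdot)$ — indeed they hold already at the level of robustness, not merely of satisfaction. The negation is thereby driven onto $\phi$; but $\phi$ ranges over the propositional closure of the predicates, which is explicitly part of the STL$(1)$ grammar and trivially closed under $\neg$ (and at the leaves $\neg\mu$ is itself a legal $\phi$), so the rewritten formula $\bar T_{[t_1,t_2]}\neg\phi$ is once more a temporal literal of STL$(1)$.

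Chaining these facts, I would finish by structural induction: given any $\Phi$ in the Boolean closure, the base cases are a single temporal literal and a purely propositional formula $\phi$ over predicates (which equals $\always_{[0,0]}\phi$, hence lies in STL$(1)$), while for $\Phi=\Phi_1\land\Phi_2$, $\Phi_1\lor\Phi_2$ or $\neg\Phi_1$ one applies De Morgan and the duality identities together with the induction hypothesis and the closure properties just shown; the outcome is a disjunction of conjunctions of temporal literals, i.e., an STL$(1)$ formula equivalent to $\Phi$. I expect the main difficulty to be bookkeeping rather than conceptual: one must check that this rewriting never inserts a temporal operator inside the scope of another — it does not, because duality only ever swaps $\event\leftrightarrow\always$ and re-targets the pending negation at the purely propositional $\phi$, so the defining ``no nested temporal operators'' restriction of STL$(1)$ is preserved at every step — and one must keep the $\phi$-parts genuinely predicate-level throughout.
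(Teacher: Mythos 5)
Your proof is correct and takes essentially the same route as the paper: the paper's one-line sketch rests on exactly the temporal--Boolean identities you invoke (the duality $\lnot\always_I\phi=\event_I\lnot\phi$, $\lnot\event_I\phi=\always_I\lnot\phi$, together with distributivity), and your structural induction with De Morgan and DNF re-normalization is just the fleshed-out version of that sketch, under the same (intended) reading that STL$(1)$ admits top-level Boolean combinations of temporal literals. The only cosmetic difference is that you never need the two distributivity identities $\event_I(\phi_1\lor\phi_2)=\event_I\phi_1\lor\event_I\phi_2$ and $\always_I(\phi_1\land\phi_2)=\always_I\phi_1\land\always_I\phi_2$ cited by the paper, since you keep each temporal literal's propositional argument intact and only push negations inward.
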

\begin{proof}[Sketch]
    It follows immediately from the distributivity properties of always and eventually with respect to Boolean operators, i.e.,
    $\event_I (\phi_1 \lor \phi_2) = \event_I \phi_1 \lor \event_I \phi_2$,
    $\always_I (\phi_1 \land \phi_2) = \always_I \phi_1 \land \always_I \phi_2$,
    and $\lnot \always_I \phi = \event_I \lnot \phi$.
\end{proof}

Note, STL$(1)$ captures the same fragment as DT-STL~\cite{bombara2016decision}. However, the representation of formulas is more compact since we do not impose the decision tree structure with primitive formula decision nodes.

In this work, we consider STL$(1)$ formulas in the disjunctive normal form (DNF), which can be expressed as
\begin{equation}
\label{eq:stl_formula}
    \chi = (\psi_{11}\land\psi_{12}\land...\land\psi_{1p})\lor...\lor(\psi_{n1}\land\psi_{n2} \land...\land\psi_{nq}),
\end{equation}
where $\psi_{ij} :=\event_{[t_1,t_2]} \varphi_{ij}\mid \always_{[t_1,t_2]} \varphi_{ij}$, $\varphi_{ij}:=\mu\mid\neg\mu$. The STL formula $\chi$ takes disjunction of $n$ terms, $p,...,q$ are the number of sub-formulas to take conjunction inside each term.

We use a matrix called \emph{conjunction-disjunction matrix} $M$ to represent DNF in \eqref{eq:stl_formula}. The conjunction-disjunction matrix is a binary matrix defined as:
\begin{equation}\label{eq:c-d matrix}
    M = \bmat{\mathbf{c_1}\\\mathbf{c_2}\\\vdots\\\mathbf{c_m}}\\
    = \bmat{c_{11} &c_{12} &... &c_{1k}\\
    c_{21} &c_{22} &... &c_{2k}\\
    \vdots &\vdots &\ddots &\vdots\\
    c_{m1} &c_{m2} &... &c_{mk}\\}\in\{0,1\}^{m\times k}
\end{equation}
where $m>n$, $k>\max\{p,...,q\}$. $\mathbf{c}_i$ is the indicator vector in \eqref{stl-ind:sub3}. We can define the indicator vector $\mathbf{d}$ in \eqref{stl-ind:sub4} based on $M$ as:
\begin{equation}\label{eq:ber1}
\begin{aligned}
\mathbf{d} &= \bmat{d_1,d_2,...,d_m}^T\\
    d_i &=
    \begin{cases}
    0 & \textrm{if } \mathbf{c}_i=\mathbf{0},\\
    1 & \textrm{otherwise.}
    \end{cases}
\end{aligned}
\end{equation}
The STL(1) formula can be easily constructed from the conjunction-disjunction matrix. For example, we learn 5 sub-formulas $\phi_1, \phi_2, \phi_3, \phi_4, \phi_5$ and a conjunction-disjunction matrix $M=\bmat{1 &1 &0 &0 &0\\ 0 &1 &1 &1 &0\\ 0 &0 &0 &0 &0}$, then the final STL(1) formula is $\psi = (\phi_1\land\phi_2)\lor(\phi_2\land\phi_3\land\phi_4)$. The architecture of NN-TLI in the form of \eqref{eq:stl_formula} is shown in Fig~\ref{fig:structure}.

\begin{theorem}
A logical formula can be converted into an equivalent DNF.\cite{davey2002introduction}
\end{theorem}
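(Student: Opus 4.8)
The plan is to prove the standard result that every propositional (logical) formula admits an equivalent formula in disjunctive normal form, specialized here to the Boolean structure built on top of the temporal sub-formulas $\psi_{ij}$, which play the role of atomic propositions. I would proceed by structural induction on the formula $\phi$ generated by the grammar for STL$(1)$ Boolean combinations (or, equivalently, by induction on the standard propositional grammar $\phi ::= \mathrm{atom} \mid \neg\phi \mid \phi_1 \land \phi_2 \mid \phi_1 \lor \phi_2$, treating each $\psi_{ij}$ as an atom). The base case is immediate: an atom is already a single conjunct in a single clause, hence trivially in DNF. For the disjunction case, if $\phi_1$ and $\phi_2$ are already in DNF, then $\phi_1 \lor \phi_2$ is simply the union of the two lists of clauses, which is again in DNF, so nothing is required.

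The two substantive cases are conjunction and negation. For conjunction, suppose $\phi_1 = \bigvee_i A_i$ and $\phi_2 = \bigvee_j B_j$ are in DNF, where each $A_i$ and $B_j$ is a conjunction of literals. Then by repeated application of the distributive law $X \land (Y \lor Z) = (X \land Y) \lor (X \land Z)$ we obtain $\phi_1 \land \phi_2 = \bigvee_{i,j} (A_i \land B_j)$, and each $A_i \land B_j$ is again a conjunction of literals, so the result is in DNF. For negation, I would first push the negation inward using De Morgan's laws, $\neg(\phi_1 \land \phi_2) = \neg\phi_1 \lor \neg\phi_2$ and $\neg(\phi_1 \lor \phi_2) = \neg\phi_1 \land \neg\phi_2$, together with double-negation elimination $\neg\neg\phi = \phi$; this reduces $\neg\phi$ to a formula in negation normal form whose negations apply only to atoms, after which the conjunction and disjunction cases (already handled inductively) apply. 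An alternative, cleaner route is to observe that the set of formulas expressible in DNF is closed under $\land$, $\lor$ (shown above) and under $\neg$ (a formula in DNF, say $\bigvee_i \bigwedge_j \ell_{ij}$, negates via De Morgan to $\bigwedge_i \bigvee_j \neg\ell_{ij}$, which is a conjunction of DNF formulas and hence, by the conjunction closure, again DNF), so the DNF-expressible formulas form a class containing the atoms and closed under all the connectives — hence they are all of them.

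The main obstacle, such as it is, is purely bookkeeping rather than conceptual: one must be careful that the inductive hypothesis is applied to strictly smaller formulas. In the negation case in particular, $\neg\phi$ is not smaller than $\phi$ as written, so the induction must be set up either on a well-founded measure that decreases under the De Morgan rewrites (e.g. a lexicographic measure combining formula size with the nesting depth of negations), or one must first establish a "negation normal form" lemma separately and then induct only over $\land$ and $\lor$. Since this is a classical, textbook result (cited as~\cite{davey2002introduction}), I would keep the proof to a brief paragraph stating the structural induction and highlighting the distributivity step as the operative case, rather than belaboring the termination argument.
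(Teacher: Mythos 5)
The paper offers no proof of this statement at all --- it is quoted as a textbook fact with only the citation to Davey and Priestley --- so there is nothing to compare against beyond the standard argument, which is exactly what you give. Your structural induction (distributivity for $\land$, De Morgan plus double-negation for $\neg$, concatenation of clauses for $\lor$) is correct, and you rightly flag the only delicate point, namely that the negation case needs either a prior negation-normal-form lemma or the closure-of-the-DNF-class formulation you sketch, which applies the inductive hypothesis to the strictly smaller $\phi$ and is therefore sound. The one remark worth adding in the paper's context is that the theorem is invoked for STL$(1)$ formulas, where pushing negations past the temporal operators additionally uses $\lnot\always_I\phi = \event_I\lnot\phi$ (already noted in the paper's preceding theorem), after which your propositional argument over the sub-formulas $\psi_{ij}$ applies verbatim.
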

\begin{remark}
The maximum size of the template is fixed, whereas the exact size of the template is free to learn.
\end{remark}
\begin{remark}\label{remark:dnf}
All level-one STL formulae can be encoded in an NN-TLI of sufficient size.
\end{remark}

\subsection{First layer: Spatial Predicates}

The first layer generates the predicates in the form:
\begin{equation}
    \mu=\pi(s(\tau))=\mathbf{a}^Ts(\tau) - b,
\end{equation}
where $\mathbf{a}$ or $-\mathbf{a}$ is an element of the standard basis of the space $\real{d}$, $b\in\real{}$. Note that $\mathbf{a}$ is fixed, and $b$ is a variable. The output of $\pi(s(\tau))$ is the axis-aligned quantitative satisfaction of signal $s$ at time $\tau$.

\subsection{Second and third layer: temporal and logical operators}

The second layer has 2 types of temporal operators, "eventually" $\event_{[t_1,t_2]}$ and "always" $\always_{[t_1,t_2]}$, where the time variables $t_1, t_2$ are learned from data. The third layer includes 2 types of logical operators, conjunction $\land$ and disjunction $\lor$. 
The quantitative semantics of these four operators are defined in \eqref{stl-semantics}. However, \eqref{stl-semantics:sub3}, \eqref{stl-semantics:sub4} are not differentiable with respect to time variables $t_1,t_2$. Moreover, the gradient of the min/max function in \eqref{stl-semantics} is zero everywhere except for the min/max point, thus, the neural network does not learn through the back-propagation.

In this paper, we propose a novel definition of robustness for operators $\always_{[t_1,t_2]},\event_{[t_1,t_2]},\land,\lor$; which are differentiable with respect to variables $t_1,t_2$ and $s$. 
The activation functions are defined as:
\begin{subequations}
\begin{align}
g^{\always}(\mathbf{r},t_1,t_2;\eta,\beta,h) &= -F(-\mathbf{r},f_t(t_1,t_2;\eta);\beta,h),\label{always}\\
g^{\event}(\mathbf{r},t_1,t_2;\eta,\beta,h) &=F(\mathbf{r}, f_t(t_1,t_2;\eta);\beta,h),\label{eventually}\\
g^{\land}(\mathbf{r},\mathbf{c}^g;\beta,h) &= -F(-\mathbf{r},p(\mathbf{c}^g);\beta,h),\label{and}\\
g^{\lor}(\mathbf{r},\mathbf{d};\beta,h) &= F(\mathbf{r},\mathbf{d};\beta,h) \label{or},
\end{align}
\end{subequations}
where $\eta,\beta,h$ are hyperparameters, $F(\cdot)$ is the \emph{sparse softmax function} (See \ref{soft}), $f_t(\cdot)$ is the \emph{time function} (See \ref{time}), $\mathbf{c}^g$ and $\mathbf{d}$ are vectors from \emph{conjunction-disjunction gate matrix} (See \ref{logical}) and $p(\cdot)$ is the \emph{gate function} (See \ref{logical}).

In the following sections, we will introduce details of the operator layers and provide rigorous soundness analysis for the proposed STL robustness.

\subsection{Sparse softmax function}\label{soft}
Since the standard robustness computation in Definition~\ref{def:r} including $\max$ and $\min$ is non-smooth and not differentiable, the alternative solution is to use softmax for the approximation:
\begin{equation}\label{softmax}
\begin{aligned}
    s(\mathbf{r},\mathbf{w};\beta)=\frac{\sum_{i=0}^{l} r_i w_i e^{\beta r_i}}{\sum_{i=0}^{l} w_i e^{\beta r_i}}= \sum_{i=0}^{l} r_i q^s_i,
\end{aligned}
\end{equation}
where $\mathbf{r}$ is an input vector, $\mathbf{w}$ is a weight vector, $q^s_i=\frac{w_i e^{\beta r_i}}{\sum_{i=0}^{l} w_i e^{\beta r_i}}$. This function can also be used to approximate $\min$ by simply replacing the $e^{\beta r_i}$ with $e^{-\beta r_i}$. Such a calculation has been applied as activation functions of temporal-logic-based neural networks in~\cite{leung2020back,yan2021neural}. However, the softmax approximation always introduces errors in the estimated robustness. As a result, soundness properties of STL robustness might not be guaranteed~\cite{maler2004monitoring}, leading to inaccurate results and erroneous STL interpretation. For example, for a learned STL formula $\phi$, the true robustness computed from \eqref{stl-semantics} given a signal $s$ is $r(s,\phi)$ and the approximated robustness using softmax function is $\tilde{r}(s,\phi)$. It is possible to have $r(s,\phi)<0$ while $\tilde{r}(s,\phi)>0$, then the learned STL formula cannot interpret the property of signal $s$ correctly since $s \not\models \phi$.

To address the issue, we present \emph{sparse softmax function} that can approximate the maximum of the subset of elements, while still guaranteeing STL soundness properties. Our sparse softmax function $F(\mathbf{r},\mathbf{w};\beta,h)$ is defined through the following sequence of operations:
\begin{subequations}\label{sparsemax}
\begin{align}
    r_i' &= r_i w_i\label{max:sub1}\\
    r_{im} &= \begin{cases}
    \lvert \max_{i}(r_i')\rvert & \textrm{if } \lvert \max_i(r_i')\rvert \neq 0,\\
    1 & \textrm{otherwise.}
    \end{cases}\label{max:rm}\\
    r_i'' &= \frac{h r_i'}{r_{im}}\label{max:sub2},\\
    q_i &= \frac{e^{\beta r_i''}}{\sum_i e^{\beta r_i''}}\label{max:sub3},\\
    F(\mathbf{r},\mathbf{w};\beta,h) &= \frac{\sum_{i=0}^{l} r_i w_i q_i}{\sum_{i=0}^{l} w_i q_i} = \sum_{i=0}^{l}r_i q^F_i \label{max:sub4},
\end{align}
\end{subequations}
where $h\in\real{+}$ and $\beta\in\real{}$ are hyperparameters, $q^F_i=\frac{w_i q_i}{\sum_{i=1}^{n} w_i q_i}$, $\mathbf{r}$ contains elements to take for maximum, $\mathbf{w}$ is the binary indicator vector defined in Definition~\ref{def:r-ind}, from which we can obtain a subset of elements in $\mathbf{r}$.

The approximated maximum $F(\mathbf{r},\mathbf{w};\beta,h)$ in \eqref{max:sub4} is the weighted sum of $r_i's$. We first select the elements to take for maximum by zeroing out $r_i$ if $w_i=0$ in \eqref{max:sub1}. Next, we scale $r_i'$ to $r_i''$ in \eqref{max:sub2}. We transform the result into a probability max function $q$ in \eqref{max:sub3}.

\begin{proposition}
The sparse softmax function $F(\mathbf{r},\mathbf{w};\beta,h)$ defined in \eqref{sparsemax} is sound, i.e.,
\begin{subequations}
    \begin{align}
        f(\mathbf{r},\mathbf{w})\leq 0 &\Longrightarrow F(\mathbf{r},\mathbf{w};\beta,h) \leq 0\label{smax1},\\
        f(\mathbf{r},\mathbf{w})>0 &\Longrightarrow F(\mathbf{r},\mathbf{w};\beta,h) >0\label{smax2},\\
        -f(-\mathbf{r},\mathbf{w})\leq 0 &\Longrightarrow -F(-\mathbf{r},\mathbf{w};\beta,h) \leq 0\label{smin1},\\
        -f(-\mathbf{r},\mathbf{w})>0 &\Longrightarrow -F(-\mathbf{r},\mathbf{w};\beta,h) >0\label{smin2},
    \end{align}
\end{subequations}
if the hyperparameters $\beta$, $h$ satisfies $he^{\beta h}>\frac{l e^{-1}}{\beta}$ , where $\mathbf{r}\in\real{l+1}$. $f(\mathbf{r},\mathbf{w})$ is defined in \eqref{stl-indicator}.
\end{proposition}

\begin{proof}
    First, note that $q_i$ is always positive and $w_i$ is always non-negative. If $\mathbf{w}$ is a zero vector, from the property of indicator vector, $f(\mathbf{r},\mathbf{w})$ is an empty set, the robustness is meaningless. Thus, we only focus on the cases when $\mathbf{w}$ is a non-zero vector, then the denominator in \eqref{max:sub4} $\sum_{i=0}^{l} w_i q_i$ is always positive.
    
    We start from case \eqref{smax1}, from LHS, $r_i w_i \leq 0$, $\forall i\in[0,l]$, then from \eqref{max:sub4}$, F(\mathbf{r},\mathbf{w};\beta,h)\leq 0$ is derived. Similarly, for case \eqref{smin2}, from LHS, $r_iw_i\geq 0$, $\forall i\in[0,l]$ and $r_jw_j> 0$, $\exists j\in[0,l]$, from \eqref{max:sub4}, $F(\mathbf{r},\mathbf{w};\beta,h)>0$ is derived.
    
    For case \eqref{smax2}, we know that $\sum_{i=0}^{l} w_i q_i>0$, from \eqref{max:sub4}, we need $\sum_{i=0}^{l} r_i w_i q_i>0$ to achieve soundness. Let $k=\argmax_i r_iw_i$, from LHS of \eqref{smax2}, $r_k w_k>0$, then $r_k>0$, $w_k=1$. Since $\frac{h}{r_{im}}$ is positive, $\sum_{i=0}^{l} r_i w_i q_i>0\iff\sum_{i=0}^{l}\frac{h r_i w_i}{r_{im}}q_i\iff\sum_{i=0}^{l} r_i'' q_i>0$. We can get $r_k''=h$ from \eqref{max:sub2}, $\sum_{i=0}^{l} r_i'' q_i = r_k'' q_k+\sum_{i=0, i\neq k}^{l} r_i'' q_i=\frac{r_k'' e^{\beta r_k''}+\sum_{i=0, i\neq k}^{l} r_i'' e^{\beta r_i''}}{\sum_{i=0}^{l} e^{\beta r_i''}}$. Since $\sum_{i=0}^{l} e^{\beta r_i''}>0$, we need $r_k'' e^{\beta r_k''}+\sum_{i=0, i\neq k}^{l} r_i'' e^{\beta r_i''}>0$ to achieve soundness. The minimum of function $g(x)=xe^{\beta x}$ is $-\frac{e^{-1}}{\beta}$, thus the minimum value of ${r_i'' e^{\beta r_i''}}$ is $-\frac{e^{-1}}{\beta}$, we have $r_k'' e^{\beta r_k''}+\sum_{i=0, i\neq k}^{l} r_i'' e^{\beta r_i''}\geq he^{\beta h}-\frac{l e^{-1}}{\beta}>0$.
    
    The proof for \eqref{smin1} can be derived from case \eqref{smax2} by multiplying $-1$ on both sides of \eqref{smin1} and replacing $r_i$ with $-r_i$ in \eqref{smax2}.
\end{proof}

\begin{example}
Let a signal $s=2,1.1,1,0,-1$. Consider the specification $\phi=\event_{[1,4]} (s>1)$. The indicator vector $\mathbf{w}=\bmat{0 &1 &1 &1 &1}$ from the time interval $[1,4]$. The sequence of robustness values of predicate $\mu =s-1$ is $\mathbf{r}=\bmat{1 &0.1 &0 &-1 &-2}$. The true robustness computed from \eqref{stl-semantics} is $r(s,\phi)=0.1>0$. Choosing $\beta=1$, the approximated robustness computed from traditional softmax function in \eqref{softmax} is 
\begin{equation}
\begin{aligned}
    s(\mathbf{r},\mathbf{w};\beta)=\sum_{i=1}^{4} x_i q^s_i = -0.202<0\\
\end{aligned}
\end{equation}
where $\mathbf{q^s} = \bmat{0&0.43&0.38&0.14&0.05}$.

The approximated robustness using our sparse softmax function in \eqref{sparsemax} is  
\begin{equation}
\begin{aligned}
    \mathbf{r}' &= \bmat{0 &0.1 &0 &-1 &-2},\\
    \mathbf{r}'' &= \bmat{0 &1 &0 &-10 &-20},\\
    \mathbf{q^F} &= \bmat{0 &0.73 &0.27 &1.2\cdot 10^{-5} &5.5\cdot 10^{-10}}\\
    F(\mathbf{r},\mathbf{w};\beta,h) &= \sum_{i=0}^{l-1} x_i q^F_i = 0.073>0,
\end{aligned}
\end{equation}
with $h=1$ to satisfy $he^{\beta h}>\frac{4e^{-1}}{\beta}$. With the temporal information from indicator vector $\mathbf{w}$ and the rescaling from \cref{max:sub1,max:rm,max:sub2}, our sparse softmax function can provide valid robustness while the traditional softmax function cannot. When it comes to classification, the algorithm using traditional softmax function will classify the signal $s$ as label $-1$, whereas the signal s satisfies $\phi$.
\end{example}

We can take advantage of the sparse softmax function from the following aspects:
\begin{enumerate}
    \item Soundness guarantees allow the learned STL formula of NN-TLI to provide valid results, thus can improve the accuracy and interpretability of the learned STL formula for classifying time-series data.
    
    \item The function is able to approximate the maximum/minimum of a subset of elements.
    
    \item The function is differentiable and enables the application of off-the-shelf machine learning tools. We can take advantage of their large collection of optimization tools and accelerate the training process by parallelizing training.
\end{enumerate}

\subsection{Time function}\label{time}

Learning the time interval of an STL formula can be challenging since the number of parameters of indicator vector $\mathbf{w_t}$ in \eqref{stl-indicator} is linear in the length of the signal, despite the fact that an interval is always specified by time indices. Moreover, learning the independent time weight might produce multiple time windows for one temporal operator, making it hard to construct the final STL formula. Previous works used fixed time windows~\cite{leung2020back,yan2021neural} or an autoencoder~\cite{chen2022interpretable} to predict a time interval for every signal; in the latter, the time window (and hence the final formula) will, in general, differ on the signal.

We propose a differentiable and efficient \emph{time function} to generate a binary indicator vector, called time vector, to encode the time interval $[t_1, t_2]$. This time function can be considered as a smooth indicator function and only requires two parameters $t_1,t_2$ regardless of the length of the signal. The time function $f_t$ is defined as:
\begin{equation}
\begin{aligned}
&f_t(t_1,t_2;\eta)\\
=&\frac{1}{\eta}\min \Bigl(\bigl(\relu(\mathbf{n}-\mathbf{1}(t_1-\eta))-\relu(\mathbf{n}-\mathbf{1}t_1)\bigr),\\
&\bigl(\relu(-\mathbf{n}+\mathbf{1}(t_2+\eta))-\relu(-\mathbf{n}+\mathbf{1}t_2)\bigr)\Bigr),\\
\end{aligned}
\end{equation}
where
\begin{equation}
\begin{aligned}
&\relu(x)=\begin{cases}
  x, & \textrm{if } x>0,\\
  0, & \textrm{otherwise}.
\end{cases}
\end{aligned}
\end{equation}
and $\mathbf{n}=\bmat{0,1,...,l}$ is a fixed vector containing $l+1$ consecutive integers from $0$ to $l$, $\mathbf{1}\in\real{l+1}$, $l$ is the length of the signal, $\eta \in\{\real{}-\{0\}\}$ is a hyperparameter tuned for the slope of the time function and can be used to adjust the learning rate of the time function (see Fig~\ref{fig:timefunc} for an illustration).

The output of the time function can be used as an indicator vector in \eqref{stl-ind:sub5} and \eqref{stl-ind:sub6}:
\begin{equation}\label{eq:time}
    \mathbf{w_t}=[w_0,w_1,...,w_{l}]\in \real{l+1}
\end{equation}

\begin{figure}[h]
    \centering
    \includegraphics[scale=0.4]{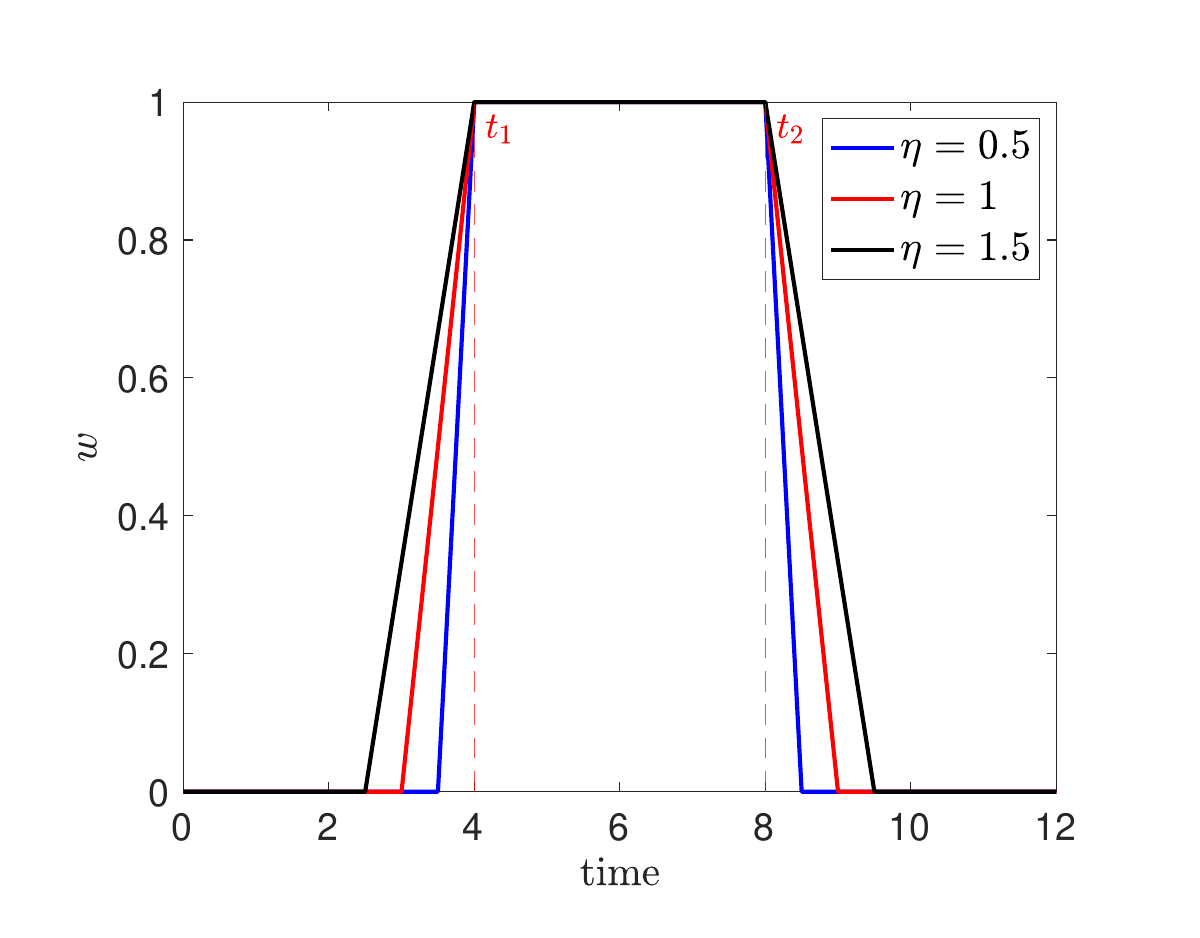}
    \caption{Time function with different $\eta$ for a signal of length $12$ with $t_1=4$ and $t_2=8$.}
    \label{fig:timefunc}
\end{figure}

\subsection{Activation Functions for Logical Operator}\label{logical}
In this section, we discuss in detail the activation functions \eqref{and}, \eqref{or}, the \emph{conjunction-disjunction gate matrix} $M^g$ and the \emph{gate function} $p(\cdot)$.

The conjunction-disjunction gate matrix $M^g$ is a \emph{real-valued} matrix defined as:
\begin{equation}\label{eq:c-d gate matrix}
    M^g = \bmat{\mathbf{c}_1^g\\\mathbf{c}_2^g\\\vdots\\\mathbf{c}_m^g}\\
    = \bmat{c_{11}^g &c_{12}^g &... &c_{1k}^g\\
    c_{21}^g &c_{22}^g &... &c_{2k}^g\\
    \vdots &\vdots &\ddots &\vdots\\
    c_{m1}^g &c_{m2}^g &... &c_{mk}^g\\}\in\real{m\times k}
\end{equation}
where $m$ is the maximum number of terms of the learned STL formula in DNF, $k$ is the maximum number of temporal operators.

We introduce the real-valued matrix $M^g$ to generate the binary conjunction-disjunction matrix $M$ in \eqref{eq:c-d matrix}. To obtain the binary matrix $M$, the method from \cite{danyang:sparsenn} is employed. We introduce $c_{ij}^g$ as variables drawn from a Bernoulli distribution. We consider $c_{ij}$ as gate variables sampled from the Bernoulli random variables $c_{ij}^g$. The Bernoulli function is defined as:
\begin{equation}\label{eq:ber2}
    bernoulli(c_{ij}^g)=
    \begin{cases}
      1 & \textrm{if } 0.5\leq c_{ij}^g\leq 1,\\
      0 & \textrm{if } 0\leq c_{ij}^g< 0.5.
    \end{cases}
  \end{equation}
We learn the real-valued variables $c_{ij}^g$ instead of learning the binary weight variables $c_{ij}$. We estimate the gradient of step $c \sim bernoulli(c^g)$ using the straight-through estimator~\cite{danyang:STE}. To guarantee $c$ falls in $[0,1]$, we pass $c$ through a clip function after performing the optimizer step. The clip function is:
\begin{equation}\label{eq:clip}
\clip(c)=
    \begin{cases}
      1 & \textrm{if } c\geq 1,\\
      0 & \textrm{if } c\leq 0,\\
      c & \textrm{otherwise}.
    \end{cases}
\end{equation}
The gate function $p(c^g)$ is defined as:
\begin{equation}
    p(c^g) = bernoulli(\clip(c^g)),
\end{equation}
which can be used to obtain the indicator vectors $\mathbf{c}_i,i=1,...,m$ in \eqref{eq:c-d matrix}:
\begin{equation}
    \mathbf{c}_i=p(\mathbf{c}_i^g)
\end{equation}
We can learn the logical operators by simply learning the variables of the conjunction-disjunction gate matrix $M^g$. The complexity of the neural network scales linearly with the number of temporal operators, and also linearly with the number of "disjunction" logical operators. The NN-TLI can efficiently learn highly complex STL formulas.

\begin{figure*}[h]
  \centering
  \subfloat["Go Forward" and "Stop and Go"]{\includegraphics[scale=0.25]{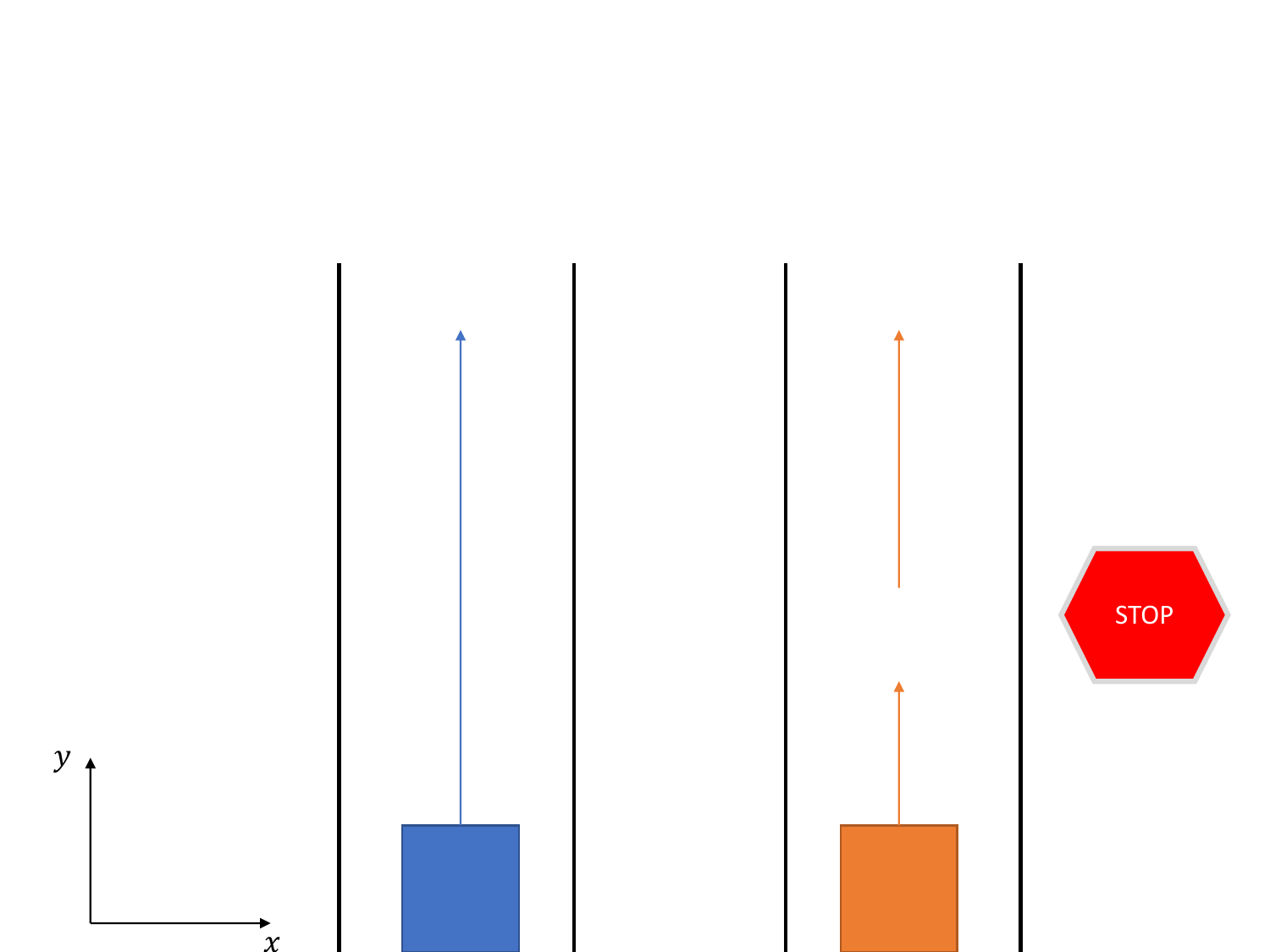}\label{fig:s1}}
  \subfloat["Left turn to lane 1" and "Left Turn to lane 2"]{\includegraphics[scale=0.25]{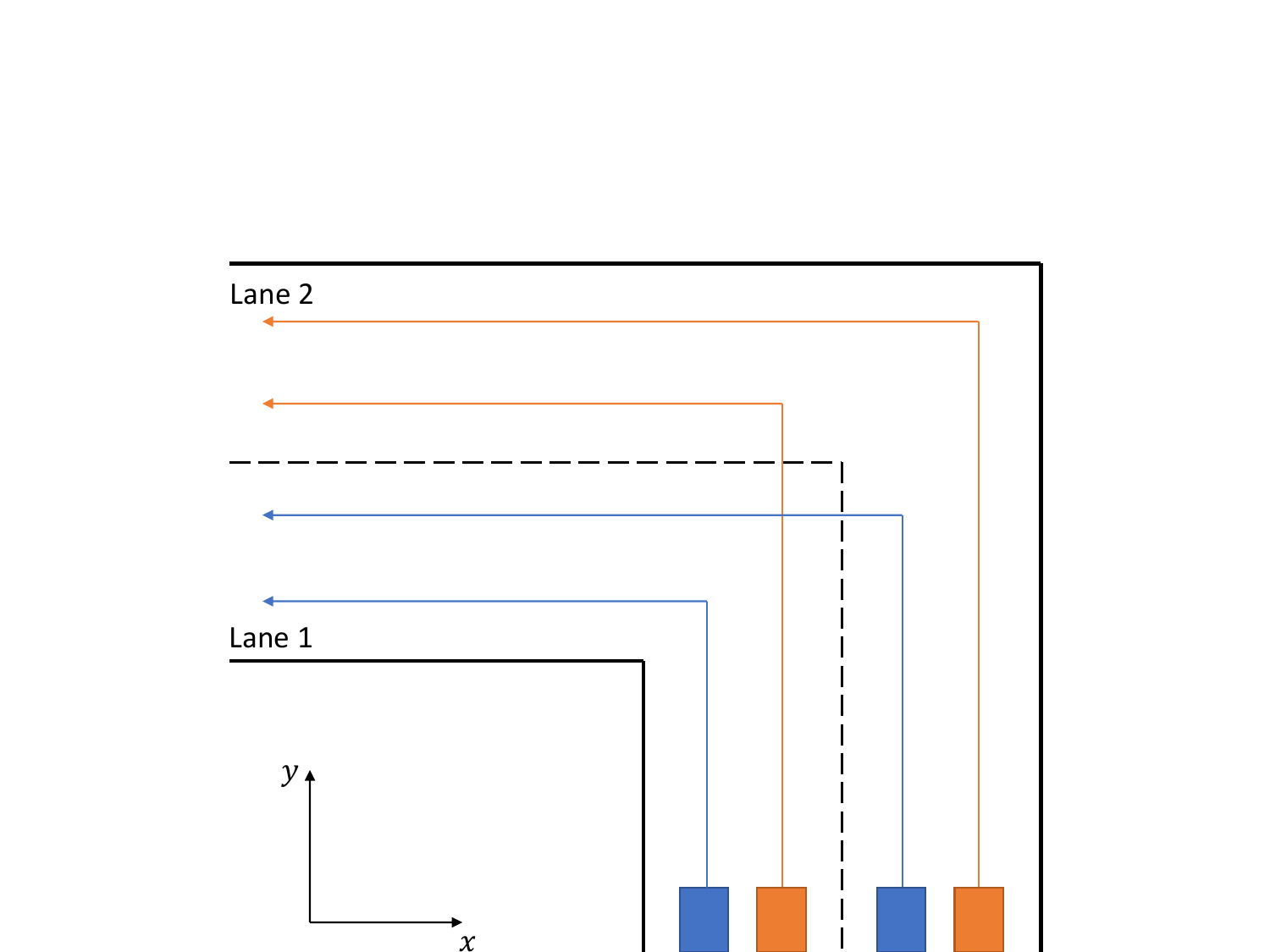}\label{fig:s2}}
  \subfloat["Switch Lane" and "Overtake"]{\includegraphics[scale=0.25]{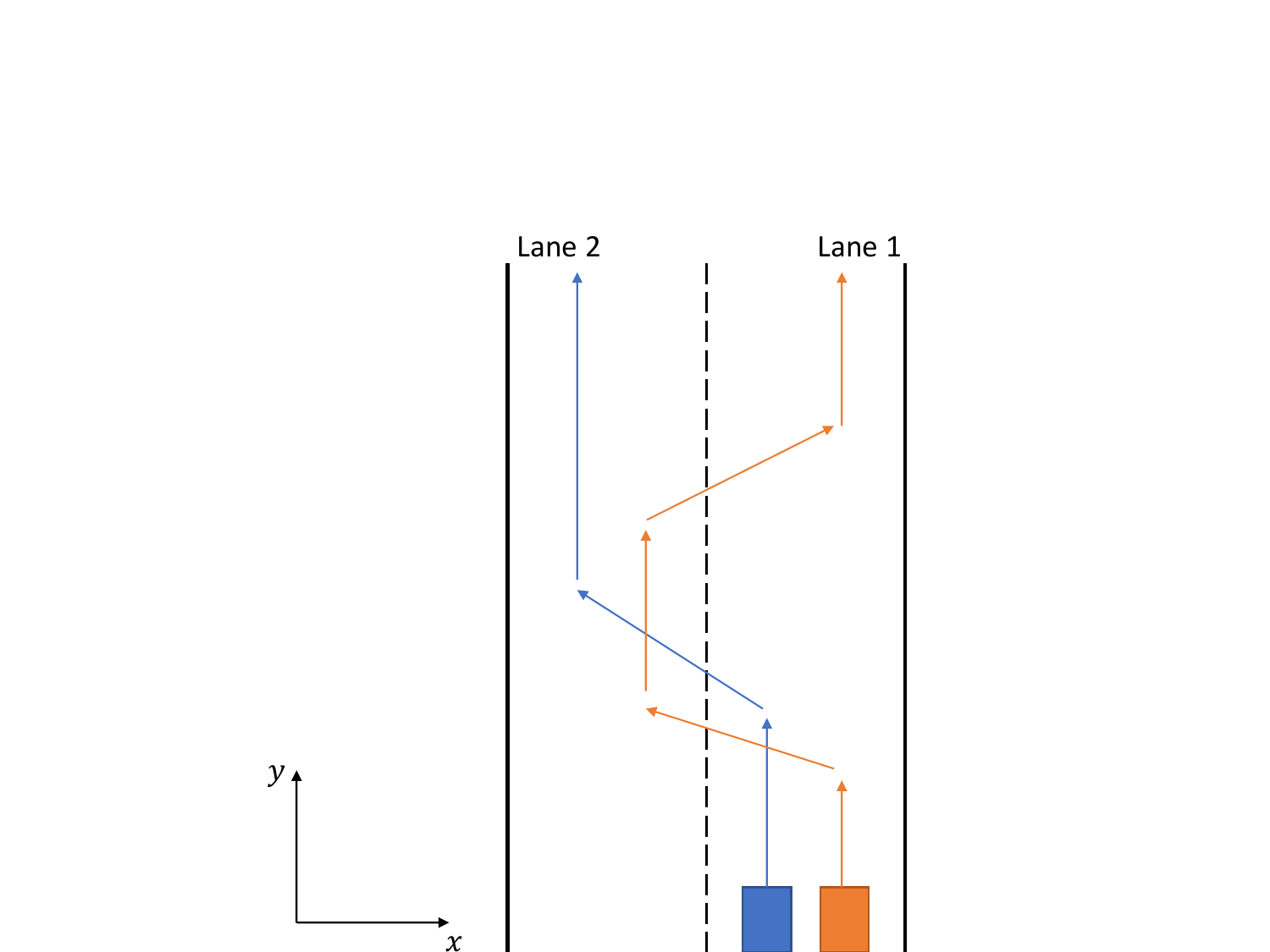}\label{fig:s3}}
  \caption{Examples of driving behaviors. The blue trajectories in (a), (b), (c) represent "Go Forward", "Left turn to lane 1" and "Switch Lane", respectively. The orange trajectories in (a), (b), (c) represent "Stop and Go", "Left Turn to lane 2" and "Overtake", respectively.}
  \label{fig:driving}
\end{figure*}

\section{LEARNING OF NN-TLI}\label{sec-4}
\label{sec:nn-tli-training}

We use Pytorch to build the NN-TLI and perform the back-propagation. The code can be found at \url{https://github.com/danyangl6/NN-TLI}.
\subsection{Loss function}
We label the two sets of data as 1 and -1, respectively. The purpose of the learning process is to minimize the loss function is defined as:
\begin{equation}
    l = e^{-yr_{\chi}},
\end{equation}
where $y$ is the label of the data, $r_{\chi}$ is the robustness degree of the learned STL formula.

\subsection{Formula simplification}

To obtain a compact STL formula, we introduce a method to post-process the conjunction-disjunction matrix $M$ after the training procedure. The formula simplification algorithm is shown in Algorithm~\ref{simplify}. By leveraging this method, we can remove the redundant sub-formulas from the STL formula and extract the meaningful features of the data.
\begin{algorithm}
  \begin{algorithmic}[1]
    \Require The conjunction-disjunction matrix $M$, number of terms $m$, number of temporal operators $k$, the misclassification rate $MCR$ using $M$.
    \Ensure Sparse conjunction-disjunction matrix $\tilde{M}$.
    \State $\tilde{M}\gets M$
    \For{$i=1,2,...,m$}
    \For{$j=1,2,...,k$}
    \State $M'\gets \tilde{M}$
    \State $M_{ij}'\gets 0$
    \State Compute the new misclassification rate $MCR'$\par
    \hskip\algorithmicindent using $M'$.
    \If{$MCR'= MCR$}
    \State $\tilde{M}\gets M'$
    \EndIf
    \EndFor
    \EndFor
  \end{algorithmic}
  \caption{Formula Simplification Algorithm}
  \label{simplify}
\end{algorithm}
\begin{figure}[h]
    \centering
    \includegraphics[scale=0.3]{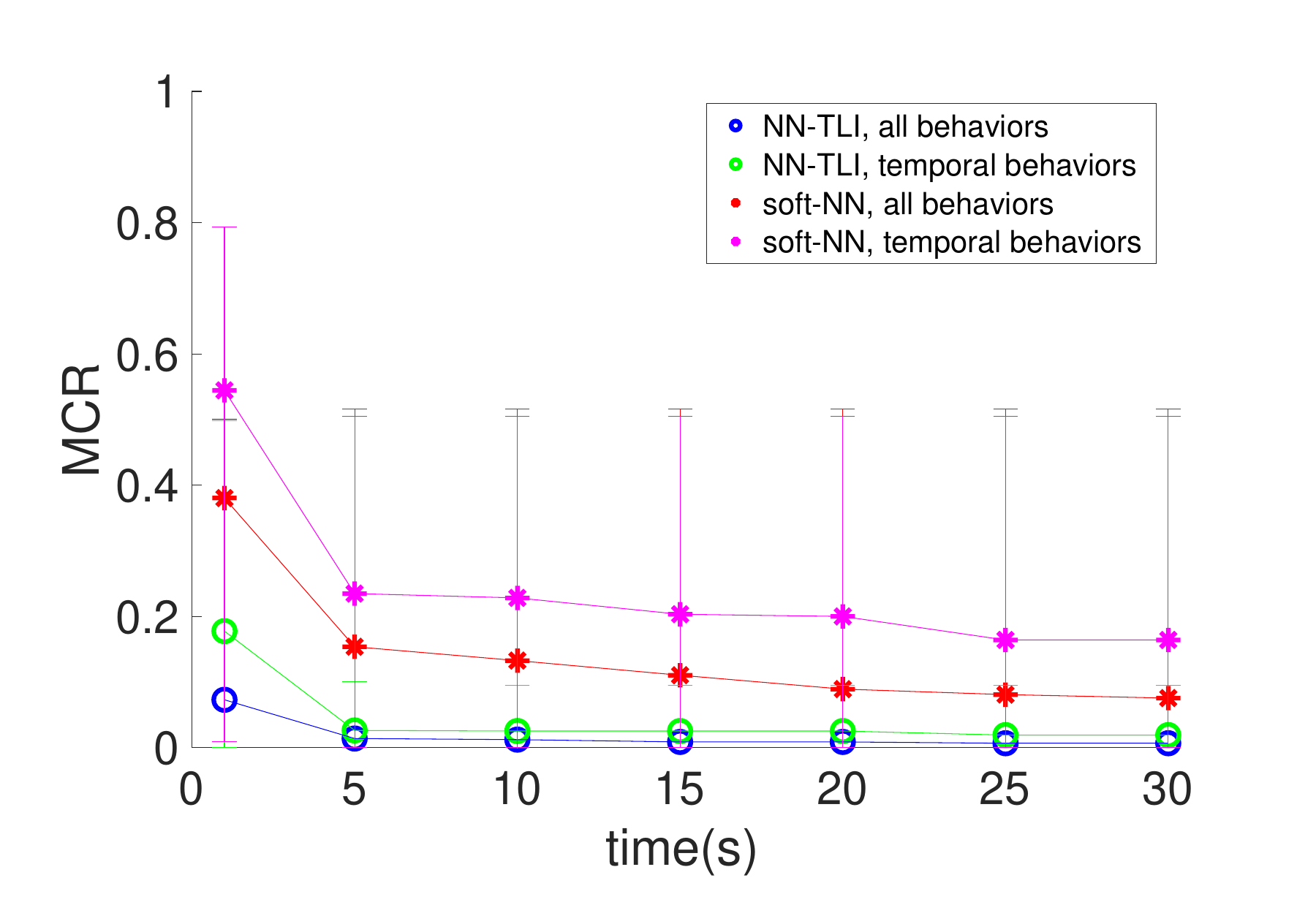}
    \caption{Result comparisons between baselines NN-TLI and soft-NN. Blue and green points represent the mean $MCR$ of NN-TLI. Red and magenta points represent the mean $MCR$ of soft-NN. 
    The error bars represent the highest and lowest $MCR$ for each case.}
    \label{fig:acc1}
\end{figure}
\section{Case study}\label{sec-5}
In this section, we first introduce a synthetic dataset containing general behaviors under autonomous driving scenarios. We use our proposed NN-TLI to classify these behaviors and interpret the characteristic from the learned STL formula. We compare our method with \cite{yan2021neural} to show the benefits of our sparse softmax function. We also experiment on a dataset of naval surveillance scenarios from \cite{kong2016temporal} and compare the results with existing methods~\cite{kong2016temporal,aasi2021classification} in the sense of $MCR$, efficiency, and interpretability.

\subsection{Autonomous driving scanarios}
We include six kinds of driving behaviors in three scenarios in the driving scenario's dataset. The signals are 2-dimensional trajectories shown in Fig \ref{fig:driving}. In scenario 1, "Go Forward" vehicle keep going straight; "Stop and Go" vehicle stop at the stop sign for three time points and then go straight. In scenario 2, the vehicles first go forward, then make a left turn and enter either lane 1 or lane 2.  In scenario 3, "Switch Lane" vehicles first go forward in lane 1, then switch to lane 2 and end up driving in lane 2, "Overtake" vehicles initially go forward in lane 1, then enter lane 2, drive back to lane 1 and end up driving in lane 1. The autonomous vehicle is considered to start at a random initial position with random velocities. For each kind of behavior, we generate 2000 samples of length 40. The code for generating the dataset can be found here \url{https://github.com/danyangl6/NN-TLI}.

We use the proposed NN-TLI to classify between two combinations of these six behaviors, that is 30 test cases in total. We compare our NN-TLI with the network using activation functions based on the traditional softmax function. We call it soft-NN for short reference. The parameter $\beta$ in activation functions is the same for both methods. We evaluate these methods based on $MCR$ and computation time. The results are shown in Fig \ref{fig:acc1}. The blue and red points with error bar represent $MCR$ of NN-TLI and soft-NN, respectively, which indicates that compared to soft-NN, the proposed NN-TLI can reach a lower mean $MCR$ in a shorter time, thanks to its soundness. The proposed NN-TLI is relatively stable, with 10\% difference between the maximum and minimum $MCR$, as opposed to the 50\% of soft-NN.

It is worth noting that temporal property is not necessary when classifying some of the behaviors. For example, we can classify "Go forward" vs "Turn left to lane 1" with a fixed time interval $[0,40]$. To address this, we intentionally pick the results between behaviors requiring temporal characteristics to classify, e.g. "Go Forward" vs "Stop and Go". The results are also shown in Fig \ref{fig:acc1}. The green and magenta points with error bar represent $MCR$ of NN-TLI and soft-NN, respectively. The results show that the mean $MCR$ of soft-NN dramatically increases when classifying these behaviors since the traditional softmax function is not able to properly approximate the maximum within the specified time interval.

We can straightforwardly interpret the property of the data from the STL formula learned by NN-TLI. For example, the learned formula to classify "Go forward" and "Overtake" with lane 1 between $[-2,2]$ and lane 2 between $[-6,-2]$ is
\begin{equation}
    \always_{[0,39]} (x>-1.97)
\end{equation}
We can describe the property of "Go forward" as "the vehicle always stays at lane 1 for all time."
\begin{figure}[h]
    \centering
    \includegraphics[scale=0.3]{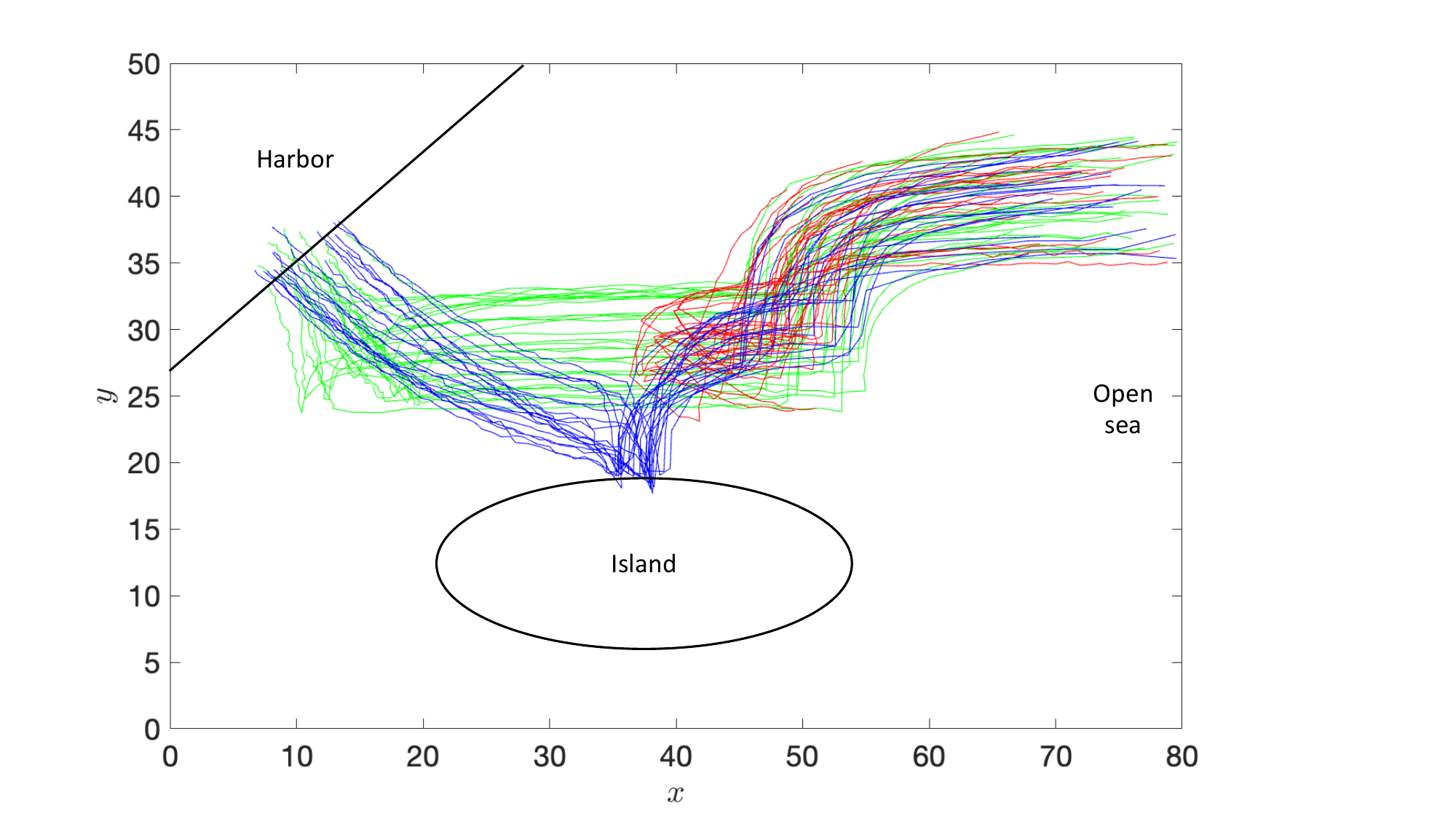}
    \caption{Naval surveillance scenario}
    \label{fig:naval}
\end{figure}
\begin{table*}[h]
\caption{Comparison results between NN-TLI and other interpretable classification methods}
\label{table:navalstl}
\centering
{
\hspace*{-0.2cm}
\begin{tabular}{rccc}
  \toprule
  Method &$MCR$ &Time(min) &STL formula\\
  \midrule
  NN-TLI &0.0000 &0.2 &$\always_{[9,14]}(y>23.37) \land \event_{[60,60]}(x<27.96)$\\
  soft-NN &0.0075 &1.43 &$\always_{[0,15]}(y>23.72) \land \event_{[60,60]}(x<20.23)$\\
  BCDT &0.0100 &33 &$\event_{[28,53]}(x\leq 30.85)\land \always_{[2,26]}((y>21.31) \land (x>11.10))$\\
  On-line learning &0.0885 &16 &$\event_{[0,38]}(\always_{[21,27]}(y>19.88) \land \always_{[11,36]}(x<34.08))$\\
  \bottomrule
  \end{tabular}
  }
\end{table*}

\subsection{Naval surveillance scenario}
To compare with the existing interpretable classification methods~\cite{kong2016temporal,aasi2021classification}, we conduct experiments on the naval surveillance scenario dataset from their paper. The naval dataset is consisting of 2-dimensional trajectories representing various vessel behaviors. As shown in Fig \ref{fig:naval}, the green trajectories are normal behaviors, labeled as "1", representing vessels from the open sea and heading directly to the harbor. The blue and red trajectories are abnormal trajectories, labeled as "-1". The blue trajectories show vessels veering to the island and the red trajectories show vessels returning to the open sea after approaching island. We use our proposed NN-TLI to classify normal and abnormal behaviors and compare with the results from boosted concise decision tree method(BCDT)~\cite{aasi2021classification} and online learning method~\cite{kong2016temporal}.

We provide the learned STL formula from our NN-TLI and methods from \cite{kong2016temporal,aasi2021classification} in Table \ref{table:navalstl}. The learned STL formula from our NN-TLI is
\begin{equation}\label{navalstl}
    \always_{[9,14]}(y>23.37) \land \event_{[60,60]}(x<27.96)
\end{equation}
We can explain the characteristic of normal vessel behaviors with human language using \eqref{navalstl}: The vessels eventually reach the harbor in the end and always do not reach the island between $9$ seconds and $14$ seconds. The learned STL formula of NN-TLI is solid while compact. We compare the time to reach the lowest $MCR$ during the learning procedure, as shown in Table \ref{table:navalstl}. The proposed NN-TLI can reach 0\% $MCR$ at $12$ seconds, which is $150\times$ faster than boosted concise decision tree method in \cite{aasi2021classification} and $80\times$ faster than on-line learning algorithm in \cite{kong2016temporal}, which only reach 8.85\% $MCR$.

\section{CONCLUSIONS AND FUTURE WORK}\label{sec-6}

In this paper, we present an interpretable neural network to classify time-series data, which can be directly translated into an STL formula. By leveraging the novel activation functions, we can easily learn the STL formula from scratch using off-the-shelf machine learning tools while guaranteeing soundness. We demonstrate that our proposed method can interpret the properties of data with high computation efficiency and accuracy on a synthesis dataset based on autonomous driving scenarios and a naval scenarios dataset. In the future, we would like to extend this framework to represent more complicated properties of time-series data.



\bibliographystyle{IEEEtran}
\bibliography{biblio/NNTLI}

\begin{thebibliography}{10}
\providecommand{\url}[1]{#1}
\csname url@samestyle\endcsname
\providecommand{\newblock}{\relax}
\providecommand{\bibinfo}[2]{#2}
\providecommand{\BIBentrySTDinterwordspacing}{\spaceskip=0pt\relax}
\providecommand{\BIBentryALTinterwordstretchfactor}{4}
\providecommand{\BIBentryALTinterwordspacing}{\spaceskip=\fontdimen2\font plus
\BIBentryALTinterwordstretchfactor\fontdimen3\font minus
  \fontdimen4\font\relax}
\providecommand{\BIBforeignlanguage}[2]{{%
\expandafter\ifx\csname l@#1\endcsname\relax
\typeout{** WARNING: IEEEtran.bst: No hyphenation pattern has been}%
\typeout{** loaded for the language `#1'. Using the pattern for}%
\typeout{** the default language instead.}%
\else
\language=\csname l@#1\endcsname
\fi
#2}}
\providecommand{\BIBdecl}{\relax}
\BIBdecl

\bibitem{acharya2017deep}
U.~R. Acharya, S.~L. Oh, Y.~Hagiwara, J.~H. Tan, M.~Adam, A.~Gertych, and
  R.~San~Tan, ``A deep convolutional neural network model to classify
  heartbeats,'' \emph{Computers in biology and medicine}, vol.~89, pp.
  389--396, 2017.

\bibitem{tang2015document}
D.~Tang, B.~Qin, and T.~Liu, ``Document modeling with gated recurrent neural
  network for sentiment classification,'' in \emph{Proceedings of the 2015
  conference on empirical methods in natural language processing}, 2015, pp.
  1422--1432.

\bibitem{maler2004monitoring}
O.~Maler and D.~Nickovic, ``Monitoring temporal properties of continuous
  signals,'' in \emph{Formal Techniques, Modelling and Analysis of Timed and
  Fault-Tolerant Systems}.\hskip 1em plus 0.5em minus 0.4em\relax Springer,
  2004, pp. 152--166.

\bibitem{raman2014model}
V.~Raman, A.~Donz{\'e}, M.~Maasoumy, R.~M. Murray, A.~Sangiovanni-Vincentelli,
  and S.~A. Seshia, ``Model predictive control with signal temporal logic
  specifications,'' in \emph{53rd IEEE Conference on Decision and
  Control}.\hskip 1em plus 0.5em minus 0.4em\relax IEEE, 2014, pp. 81--87.

\bibitem{ma2020stlnet}
M.~Ma, J.~Gao, L.~Feng, and J.~Stankovic, ``Stlnet: Signal temporal logic
  enforced multivariate recurrent neural networks,'' \emph{Advances in Neural
  Information Processing Systems}, vol.~33, pp. 14\,604--14\,614, 2020.

\bibitem{jin2013mining}
X.~Jin, A.~Donz{\'e}, J.~V. Deshmukh, and S.~A. Seshia, ``Mining requirements
  from closed-loop control models,'' in \emph{Proceedings of the 16th
  international conference on Hybrid systems: computation and control}, 2013,
  pp. 43--52.

\bibitem{bakhirkin2018efficient}
A.~Bakhirkin, T.~Ferr{\`e}re, and O.~Maler, ``Efficient parametric
  identification for stl,'' in \emph{Proceedings of the 21st International
  Conference on Hybrid Systems: Computation and Control (part of CPS Week)},
  2018, pp. 177--186.

\bibitem{hoxha2018mining}
B.~Hoxha, A.~Dokhanchi, and G.~Fainekos, ``Mining parametric temporal logic
  properties in model-based design for cyber-physical systems,''
  \emph{International Journal on Software Tools for Technology Transfer},
  vol.~20, no.~1, pp. 79--93, 2018.

\bibitem{kong2016temporal}
Z.~Kong, A.~Jones, and C.~Belta, ``Temporal logics for learning and detection
  of anomalous behavior,'' \emph{IEEE Transactions on Automatic Control},
  vol.~62, no.~3, pp. 1210--1222, 2016.

\bibitem{bombara2016decision}
G.~Bombara, C.-I. Vasile, F.~Penedo, H.~Yasuoka, and C.~Belta, ``A decision
  tree approach to data classification using signal temporal logic,'' in
  \emph{Proceedings of the 19th International Conference on Hybrid Systems:
  Computation and Control}, 2016, pp. 1--10.

\bibitem{mohammadinejad2020interpretable}
S.~Mohammadinejad, J.~V. Deshmukh, A.~G. Puranic, M.~Vazquez-Chanlatte, and
  A.~Donz{\'e}, ``Interpretable classification of time-series data using
  efficient enumerative techniques,'' in \emph{Proceedings of the 23rd
  International Conference on Hybrid Systems: Computation and Control}, 2020,
  pp. 1--10.

\bibitem{aasi2021classification}
E.~Aasi, C.~I. Vasile, M.~Bahreinian, and C.~Belta, ``Classification of
  time-series data using boosted decision trees,'' in \emph{2022 IEEE/RSJ
  International Conference on Intelligent Robots and Systems (IROS)}, 2022, pp.
  1263--1268.

\bibitem{gaglione2022maxsat}
J.-R. Gaglione, D.~Neider, R.~Roy, U.~Topcu, and Z.~Xu, ``Maxsat-based temporal
  logic inference from noisy data,'' \emph{Innovations in Systems and Software
  Engineering}, vol.~18, no.~3, pp. 427--442, 2022.

\bibitem{rocktaschel2017end}
T.~Rockt{\"a}schel and S.~Riedel, ``End-to-end differentiable proving,''
  \emph{Advances in neural information processing systems}, vol.~30, 2017.

\bibitem{yang2017differentiable}
F.~Yang, Z.~Yang, and W.~W. Cohen, ``Differentiable learning of logical rules
  for knowledge base reasoning,'' \emph{Advances in neural information
  processing systems}, vol.~30, 2017.

\bibitem{leung2020back}
K.~Leung, N.~Ar{\'e}chiga, and M.~Pavone, ``Back-propagation through signal
  temporal logic specifications: Infusing logical structure into gradient-based
  methods,'' in \emph{International Workshop on the Algorithmic Foundations of
  Robotics}.\hskip 1em plus 0.5em minus 0.4em\relax Springer, 2020, pp.
  432--449.

\bibitem{ketenci2020learning}
A.~Ketenci and E.~A. Gol, ``Learning parameters of ptstl formulas with
  backpropagation,'' in \emph{2020 28th Signal Processing and Communications
  Applications Conference (SIU)}.\hskip 1em plus 0.5em minus 0.4em\relax IEEE,
  2020, pp. 1--4.

\bibitem{yan2021neural}
R.~Yan and A.~Julius, ``Neural network for weighted signal temporal logic,''
  \emph{arXiv preprint arXiv:2104.05435}, 2021.

\bibitem{baharisangari2021weighted}
N.~Baharisangari, K.~Hirota, R.~Yan, A.~Julius, and Z.~Xu, ``Weighted
  graph-based signal temporal logic inference using neural networks,''
  \emph{IEEE Control Systems Letters}, vol.~6, pp. 2096--2101, 2021.

\bibitem{chen2022interpretable}
G.~Chen, Y.~Lu, R.~Su, and Z.~Kong, ``Interpretable fault diagnosis of rolling
  element bearings with temporal logic neural network,'' \emph{arXiv preprint
  arXiv:2204.07579}, 2022.

\bibitem{mehdipour2019arithmetic}
N.~Mehdipour, C.-I. Vasile, and C.~Belta, ``Arithmetic-geometric mean
  robustness for control from signal temporal logic specifications,'' in
  \emph{2019 American Control Conference (ACC)}.\hskip 1em plus 0.5em minus
  0.4em\relax IEEE, 2019, pp. 1690--1695.

\bibitem{9309020}
------, ``Specifying user preferences using weighted signal temporal logic,''
  \emph{IEEE Control Systems Letters}, vol.~5, no.~6, pp. 2006--2011, 2021.

\bibitem{davey2002introduction}
B.~A. Davey and H.~A. Priestley, \emph{Introduction to lattices and
  order}.\hskip 1em plus 0.5em minus 0.4em\relax Cambridge university press,
  2002.

\bibitem{danyang:sparsenn}
S.~Srinivas, A.~Subramanya, and R.~Venkatesh~Babu, ``Training sparse neural
  networks,'' in \emph{Proceedings of the IEEE conference on computer vision
  and pattern recognition workshops}, 2017, pp. 138--145.

\bibitem{danyang:STE}
Y.~Bengio, N.~L{\'e}onard, and A.~Courville, ``Estimating or propagating
  gradients through stochastic neurons for conditional computation,''
  \emph{arXiv preprint arXiv:1308.3432}, 2013.

\end{thebibliography}

\end{document}